\let\accentvec\vec
\let\vec\accentvec
\newcolumntype{H}{>{\setbox0=\hbox\bgroup}c<{\egroup}@{}}
\newcommand{\compos}     {\mathit{Comp}}
\newcommand{\es}         {\varnothing}
\newcommand{\Nat}        {\mathbb{N}}
\newcommand{\ag}[1]{^{(#1)}}
\renewcommand{\model}{\mathcal{S}}
\newcommand{\assmodel}{\mathcal{A}}
\begin{document}

\title{Assume-Guarantee Verification of Strategic Ability} 
\titlerunning{Assume-Guarantee Verification of Strategic Ability}
\author{{\L}ukasz Mikulski\inst{1,2} \and
Wojciech Jamroga\inst{2,3} \and
Damian Kurpiewski\inst{2,1}
}

\institute{
Faculty of Mathematics and Computer Science, Nicolaus Copernicus University, Toru{\'n}, Poland \and
Institute of Computer Science, Polish Academy of Sciences, Warsaw, Poland \and
Interdisciplinary Centre for Security, Reliability and Trust, SnT, University of Luxembourg, Luxembourg}

\maketitle

\begin{abstract}
Model checking of strategic abilities is a notoriously hard problem,
even more so in the realistic case of agents with imperfect information.
Assume-guarantee reasoning can be of great help here, providing a way to
decompose the complex problem into a small set of exponentially easier
subproblems.
In this paper, we propose two schemes for assume-guarantee verification
of alternating-time temporal logic with imperfect information.
We prove the soundness of both schemes, and discuss
their completeness. We illustrate the method by examples based on known
benchmarks, and show experimental results that demonstrate the practical
benefits of the approach.
\end{abstract}

\keywords{model checking, assume-guarantee reasoning, strategic ability}

%====================================================================================
\section{Introduction}\label{sec:intro}
%====================================================================================

Multi-agent systems involve a complex network of social and technological components.
Such components often exhibit self-interested, goal-directed behavior, which makes it harder to predict and analyze the dynamics of the system.
In consequence, formal specification and automated verification can be of significant help.

\para{Verification of strategic ability.} Many important properties of multi-agent systems refer to \emph{strategic abilities} of agents and their groups.
\emph{Alter\-nating-time temporal logic} \ATLs~\cite{Alur02ATL,Schobbens04ATL} and \emph{Strategy Logic} \SL~\cite{Mogavero14behavioral} provide powerful tools to reason about such aspects of MAS.
For example, the \ATLs formula $\coop{taxi}\Always\neg\prop{fatality}$ expresses that the autonomous cab can drive in such a way that no one gets ever killed. Similarly, $\coop{taxi,passg}\Sometm\prop{destination}$ says that the cab and the passenger have a joint strategy to arrive at the destination, no matter what the other agents do.
Specifications in agent logics can be used as input to algorithms and tools for \emph{model checking},
that have been in constant development for over 20 years~\cite{Alur98mocha-cav,Busard15reasoning,Cermak15mcmas-sl-one-goal,Huang14symbolic-epist,Kurpiewski21stv-demo,Lomuscio17mcmas}.

Model checking of strategic abilities is hard, both theoretically and in practice. First, it suffers from the well-known state/transition-space explosion.
Moreover, the space of possible strategies is at least exponential \emph{on top of the state-space explosion}, and incremental synthesis of strategies is not possible in general -- especially in the realistic case of agents with partial observability.
Even for the more restricted (and computation-friendly) logic \ATL, model checking of its imperfect information variants is \Deltwo- to \Pspace-complete for agents playing memoryless strategies~\cite{Bulling10verification,Schobbens04ATL} and \EXPTIME-complete to undecidable for agents with perfect recall~\cite{Dima11undecidable,Guelev11atl-distrknowldge}.
The theoretical results concur with outcomes of empirical studies on benchmarks~\cite{Busard15reasoning,Jamroga19fixpApprox-aij,Lomuscio17mcmas}, as well as recent attempts at verification of real-life multi-agent scenarios~\cite{Jamroga20Pret-Uppaal,Kurpiewski19embedded}.

\para{Contribution.}
In this paper, we make the first step towards compositional model checking of strategic properties in asynchronous multi-agent systems with imperfect information.
The idea of \emph{assume-guarantee reasoning}~\cite{Clarke89assGuar,Pnueli84assGuar} is to ``factorize'' the verification task into subtasks where components are verified against a suitable abstraction of the rest of the system.
Thus, instead of searching through the states (and, in our case, strategies) of the huge product of all components, most of the search is performed locally.

To achieve this, we adapt and extend the assume-guarantee framework of~\cite{Lomuscio10assGar,Lomuscio13assGar}.
We redefine the concepts of modules and their composition, follow the idea of expressing assumptions as Büchi automata, and accordingly redefine their interaction with the computations of the coalition.
Then, we propose two alternative assume-guarantee schemes for \ATLs with imperfect information.
The first, simpler one is shown to be sound but incomplete. The more complex one turns out to be both sound and complete.
We illustrate the properties of the schemes on a variant of the Trains, Gate and Controller scenario~\cite{Alur93parametric},
and evaluate the practical gains through verification experiments on models of logistic robots, inspired by~\cite{Kurpiewski19embedded}.

Note that our formal treatment of temporal properties, together with strategic properties of curtailment,\footnote{
  Provided in the supplementary material, available at \href{https://github.com/agrprima22/sup}{https://github.com/agrprima22/sup}. }
substantially extends the applicability of schemes in~\cite{Lomuscio10assGar,Lomuscio13assGar} from temporal liveness properties to strategic properties with arbitrary \LTL objectives.
We also emphasize that our schemes are sound for the model checking of agents with \emph{imperfect} as well as \emph{perfect recall}. In consequence, they can be used to facilitate verification problems with a high degree of hardness, including the undecidable variant for coalitions of agents with memory. In that case, the undecidable problem reduces to multiple instances of the \EXPTIME-complete verification of individual abilities.

\para{Structure of the paper.}
In Section~\ref{sec:models}, we present the model of concurrent MAS that we consider in this paper.
In Section~\ref{sec:logic}, we define the syntax and semantics of the logic used in the formulation of agents' strategic properties.
In Sections~\ref{sec:assumptions} and~\ref{sec:agv-single-agents} we introduce the notions of assumption and guarantee, and utilize them to propose two schemes of assume-guarantee reasoning for strategic abilities.
Finally, we present preliminary results of experimental verification in Section~\ref{sec:experiments} and conclude the paper in Section~\ref{sec:conlusions}.

\para{Related Work.}
Compositional verification (known as \emph{rely-guarantee} in the program verification community) dates back to the early 1970s and the works of Hoare, Owicki, Gries and Jones~\cite{Hoare69axiomatic,Jones83relyGuar,Owicki76relyGuar}.
Assume-guarantee reasoning for temporal specifications was introduced a decade later~\cite{Clarke89assGuar,Pnueli84assGuar}, and has been in development since that time~\cite{Devereux03compositionalRA,Fijalkow20assGuar,Henzinger98assGuar,Kwiatkowska10assGuar,Lomuscio10assGar,Lomuscio13assGar}. Moreover, automated synthesis of assumptions for temporal reasoning has been studied in~\cite{Chen10assGuar-learning,Giannakopoulou05assGuar-learning,He16assGuar-learning,Kong10assGuar-learning}.

The works that come closest to our proposal are~\cite{Devereux03compositionalRA,FinPas,Lomuscio10assGar,Lomuscio13assGar}.
In~\cite{Lomuscio10assGar,Lomuscio13assGar}, models and a reasoning scheme are defined for assume-guarantee verification of liveness properties in distributed systems. We build directly on that approach and extend it to the verification of strategic abilities.
\cite{Devereux03compositionalRA}~studies assume-guarantee reasoning for an early version of \ATL. However, their assume-guarantee rules are designed for perfect information strategies (whereas we tackle the more complex case of imperfect information), and targeted specifically the verification of aspect-oriented programs.
Finally, \cite{FinPas}~investigates the compositional synthesis of strategies for \LTL objectives. The difference to our work is that they focus on finite-memory strategies while we consider the semantics of ability based on memoryless and perfect recall strategies. Another difference lies in our use of repertoire functions that define agents’ choices in a flexible way, and make it closer to real applications.
The advantage of the solution presented in \cite{FinPas}~is the use of contracts, thanks to which it is possible to synthesize individual strategies using the knowledge of the coalition partners' strategies.
We also mention~\cite{ChatHen} that studies the synthesis of Nash equilibrium strategies for 2-player coalitions pursuing $\omega$-regular objectives. The authors call their approach \emph{assume-guarantee strategy synthesis}, but the connection to assume-guarantee verification is rather loose.

A preliminary version of the ideas, presented here, was published in the extended abstract~\cite{MKJ_AG_AAMAS22}.
Our extension of the STV tool~\cite{Kurpiewski21stv-demo}, used in the experiments, is described in the companion paper~\cite{Kurpiewski22STV+AGV}.

%====================================================================================
\section{Models of Concurrent MAS}\label{sec:models}
%====================================================================================

Asynchronous MAS have been modeled by variants of reactive modules~\cite{Alur99reactive,Lomuscio13assGar} and automata networks~\cite{Jamroga21paradoxes-kr}.
Here, we adapt the variant of reactive modules that was used to define assume-guarantee verification for temporal properties in~\cite{Lomuscio13assGar}.

\subsection{Modules}\label{sec:modules}

Let $D$ be the shared domain of values for all the variables in the system.
$D^X$ is the set of all valuations for a set of variables $X$.
The \emph{system} consists of a number of \emph{agents}, each represented by its \emph{module} and a \emph{repertoire} of available choices.
Every agent uses \emph{state variables} and \emph{input variables}.
It can read and modify its state variables at any moment, and their valuation is determined by the current state of the agent.
The input variables are not a part of the state, but their values influence
transitions that can be executed.

\begin{definition}[Module~\cite{Lomuscio13assGar}]\label{d:module}
A \emph{module} is a tuple $M=(X,I,Q,T,\lambda,q_0)$, where:\
 $X$ is a finite set of state variables;
 $I$ is a finite set of input variables with $X\cap I=\es$;
 $Q=\{q_0,q_1,\ldots,q_n\}$ is a finite set of states;
 $q_0\in Q$ is an initial state;
 $\lambda:Q\to D^X$ labels each state with a valuation of the state variables;
 finally, $T\subseteq Q\times D^I\times Q$ is a transition relation such that
  (a) for each pair $(q,\alpha)\in Q\times D^I$ there exists $q'\in Q$ with $(q,\alpha,q')\in T$, and
  (b) $(q,\alpha,q')\in T, q\neq q'$ implies $(q,\alpha,q)\notin T$.
In what follows, we omit the self-loops from the presentation.
\end{definition}

Modules $M,M'$ are \emph{asynchronous} if $X\cap X'=\es$.
We extend modules by adding \emph{repertoire functions} that define the agents' available choices in a way similar to~\cite{Jamroga21paradoxes-kr}. 

\begin{definition}[Repertoire]
Let $M=(X,I,Q,T,\lambda,q_0)$ be a module of agent $i$.
The \emph{repertoire} of $i$ is defined as $R:Q\to \powerset{\powerset{T}}$, i.e., a mapping from local states to \emph{sets of sets of transitions}.
Each $R(q) = \{T_1,\dots,T_m\}$ must be nonempty and consist of nonempty sets $T_i$ of transitions starting in $q$.
If the agent chooses $T_i \in R(q)$, then only a transition in $T_i$ can be occur at $q$ within the module.
\end{definition}

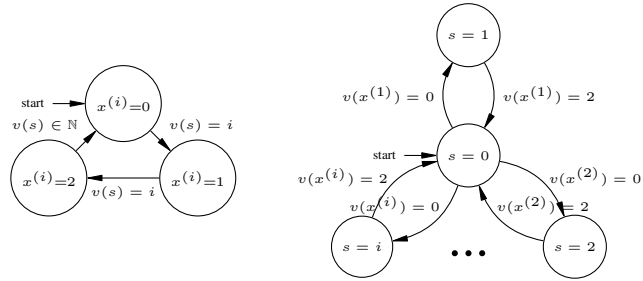
\begin{figure}[t]\centering
\begin{tabular}{@{}c@{\qquad}c@{}}
\begin{tabular}{@{}c@{}}
\begin{tikzpicture}[->,auto,>=arrow30,node distance=1.4cm,font=\tiny]
  \node[initial,state] (A)                    {$x\ag{i}\!\!=\!\!0$};
  \node[state]         (B) [below right of=A] {$x\ag{i}\!\!=\!\!1$};
  \node[state]         (C) [below left of=A]  {$x\ag{i}\!\!=\!\!2$};
  
  \path (A) edge node {$v(s)=i$} (B)
        (B) edge node {$v(s)=i$} (C)
        (C) edge node {$v(s)\in\Nat$} (A);  
\end{tikzpicture}
\end{tabular}
  &
\begin{tabular}{@{}c@{}}
\begin{tikzpicture}[->,auto,>=arrow30,node distance=2cm,font=\tiny]
  \node[initial,state] (A)                    {$s=0$};
  \node[state]         (B) [above of=A, yshift = -4mm]       {$s=1$};
  \node[state]         (C) [below left of=A, yshift = 2mm]  {$s=i$};
  \node[state]         (D) [below right of=A, yshift = 2mm] {$s=2$};
  \node                (E) [below of=A, yshift = 7mm] {{\huge ...}};
  
  \path (A) edge [bend left] node {$v(x\ag{1})=0$} (B)
            edge [bend left] node [swap, xshift=2mm] {$v(x\ag{i})=0$} (C)
            edge [bend left] node [yshift=-2mm] {$v(x\ag{2})=0$} (D)
        (B) edge [bend left] node {$v(x\ag{1})=2$} (A)
        (C) edge [bend left] node [yshift=-2mm] {$v(x\ag{i})=2$} (A)
        (D) edge [bend left] node [xshift=-2mm] [swap] {$v(x\ag{2})=2$} (A);  
\end{tikzpicture}
\end{tabular}
\end{tabular}
\caption{A variant of TCG: Train synchronizing with a semaphore (left) and the controller (right).}
\label{fig:train2}
\label{fig:controller}
\vspace*{-2mm}
\end{figure}

We
adapt
the Train-Gate-Controller (TGC) benchmark~\cite{Alur98mocha-cav} as our running example.

\begin{example}\label{ex1:train}
The module $M\ag{i}$ of a train is presented in Figure~\ref{fig:train2}~(left).
Its local states $Q\ag{i}=\{w\ag{i},t\ag{i},a\ag{i}\}$ refer, respectively, to the train \textbf{w}aiting at the entrance, riding in the \textbf{t}unnel, and cruising \textbf{a}way from the tunnel.
The sole state variable $x\ag{i}$ labels the state with values $0$, $1$, and $2$, respectively.
$I\ag{i}=\{s\}$ consists of a single input variable that takes values from an external multi-valued semaphore.
The train can enter and exit the tunnel only if the semaphore allows for that, i.e., if $v(s)=i$.
To this end, we define
$T\ag{i}=\{(w\ag{i},i,t\ag{i}),(t\ag{i},i,a\ag{i}),(a\ag{i},0,w\ag{i}),(a\ag{i},1,w\ag{i})\ldots,$
$(a\ag{i},n,w\ag{i})\} \cup \{(w\ag{i},j,w\ag{i}),(t\ag{i},j,t\ag{i}) \mid j\neq i\}$
.\footnote{
  By a slight abuse of notation, the valuation of a single variable is identified with its value. }

The module $M\ag{C(n)}$ of a controller that coordinates up to $n$ trains is depicted in Figure~\ref{fig:controller}~(right).
  Formally, it is defined by: 
\begin{itemize}
\item $X = \{s\}$ (the semaphore),
\item $I=\{x_1,\dots,x_n\}$ (the positions of trains),
\item $Q=\{r,g_1,\ldots,g_n\}$ (red or directed green light),
\end{itemize}  
  where a state with subscript $1$ represents a tunnel shared with the other trains,
  $\lambda(g_i)(s)=i$, $\lambda(r)(s)=0$, and $r$ is the initial state.
  
  The controller can change the light to green when a train is waiting for the permission to enter the tunnel, and back to red after it passed through the tunnel: $T=\{(r,v,g_i) \mid v(x_i)=0\}\cup\{(g_i,v,r) \mid v(x_i)=2\}$.

Each agent can freely choose the local transition intended to execute next. Thus,
$R\ag{i}(q) = \set{\set{(q,\alpha,q')} \mid (q,\alpha,q')\in T\ag{i}}$, and similarly for $R\ag{C(n)}$.

Note that all the modules in TCG are asynchronous.
\end{example}

\subsection{Composition of Agents}\label{sec:composition}\label{sec:models-properties}

On the level of the temporal structure, the model of a multi-agent system is given by the asynchronous composition $M = M\ag{1} | \dots | M\ag{n}$ that combines modules
$M\ag{i}$
into a single module.
The definition is almost the same as in~\cite{Lomuscio13assGar}; we only extend it to handle the repertoire functions that are needed to characterize strategies and strategic abilities.

We begin with the notion of compatible valuations to adjust local states of one agent
with the labels of the actions performed by the other agent.
Note that the local states of different asynchronous agents rely on disjoint sets of variables.

Let $Y,Z\subseteq X$ and $\rho_1\in D^Y$ while $\rho_2\in D^Z$.
We say that $\rho_1$ is compatible with $\rho_2$ (denoted by $\rho_1 \sim \rho_2$)
if for any $x\in Y\cap Z$ we have $\rho_1(x)=\rho_2(x)$.
We can compute the union of $\rho_1$ with $\rho_2$ which is compatible with $\rho_1$ by setting
$(\rho_1 \cup \rho_2)(x) = \rho_1(x)$ for $x\in Y$ and
$(\rho_1 \cup \rho_2)(x) = \rho_2(x)$ for $x\in Z$.

\begin{definition}[Composition of modules~\cite{Lomuscio13assGar}]\label{d:comp}
The composition of asynchronous modules
$M\ag{1}=(X\ag{1},I\ag{1},Q\ag{1},T\ag{1},\lambda\ag{1},q\ag{1}_{0})$
and $M\ag{2}=(X\ag{2},I\ag{2},Q\ag{2},T\ag{2},\lambda\ag{2},q\ag{2}_{0})$
(with $X\ag{1}\cap X\ag{2}=\es$) is
a composite module
$M=(X=X\ag{1}\uplus X\ag{2},I=(I\ag{1}\cup I\ag{2})\setminus X,Q\ag{1}\times Q\ag{2},T,\lambda,q_0=(q\ag{1}_{0},q\ag{2}_{0}))$,
where
\begin{itemize}
\item $\lambda:Q\ag{1}\times Q\ag{2}\to D^X$, $\lambda(q\ag{1},q\ag{2})=\lambda\ag{1}(q\ag{1})\cup\lambda\ag{2}(q\ag{2})$,
\item $T$ is the minimal transition relation derived by the set of rules presented below:
\[
\bf{ASYN_L}\;\;\;
\begin{array}{c}
q\ag{1}\xrightarrow[]{\alpha\ag{1}}_{T\ag{1}}{q'}\ag{1}\;\;\;\;q\ag{2}\xrightarrow[]{\alpha\ag{2}}_{T\ag{2}}{q'}\ag{2}\\
\alpha\ag{1}\sim\alpha\ag{2} \;\;\;\; \lambda\ag{1}(q\ag{1})\sim\alpha\ag{2} \;\;\;\; \lambda\ag{2}(q\ag{2})\sim\alpha\ag{1}\\
\hline
(q\ag{1},q\ag{2})\xrightarrow[]{(\alpha\ag{1}\cup\alpha\ag{2})\setminus X}_T({q'}\ag{1},q\ag{2})
\end{array}
\]
\[
\bf{ASYN_R}\;\;\;
\begin{array}{c}
q\ag{1}\xrightarrow[]{\alpha\ag{1}}_{T\ag{1}}{q'}\ag{1}\;\;\;\;q\ag{2}\xrightarrow[]{\alpha\ag{2}}_{T\ag{2}}{q'}\ag{2}\\
\alpha\ag{1}\sim\alpha\ag{2} \;\;\;\; \lambda\ag{1}(q\ag{1})\sim\alpha\ag{2} \;\;\;\; \lambda\ag{2}(q\ag{2})\sim\alpha\ag{1}\\
\hline
(q\ag{1},q\ag{2})\xrightarrow[]{(\alpha\ag{1}\cup\alpha\ag{2})\setminus X}_T(q\ag{1},{q'}\ag{2})
\end{array}
\]
\[
\;\;\;\;\;\bf{SYN}\;\;\;
\begin{array}{c}
q\ag{1}\xrightarrow[]{\alpha\ag{1}}_{T\ag{1}}{q'}\ag{1}\;\;\;\;q\ag{2}\xrightarrow[]{\alpha\ag{2}}_{T\ag{2}}{q'}\ag{2}\\
\alpha\ag{1}\sim\alpha\ag{2} \;\;\;\; \lambda\ag{1}(q\ag{1})\sim\alpha\ag{2} \;\;\;\; \lambda\ag{2}(q\ag{2})\sim\alpha\ag{1}\\
\hline
(q\ag{1},q\ag{2})\xrightarrow[]{(\alpha\ag{1}\cup\alpha\ag{2})\setminus X}_T({q'}\ag{1},{q'}\ag{2})
\end{array}
\]
\end{itemize}
pruned in order to avoid disallowed self-loops.
We use the notation $M=M\ag{1}|M\ag{2}$.
\end{definition}

Note that the operation is defined in~\cite{Lomuscio13assGar} for a pair of modules only.
It can be easily extended to a larger number of pairwise asynchronous modules. Moreover, the order of the composition does not matter.

Consider agents $(M\ag{1},R\ag{1}),\ \dots,\ (M\ag{n},R\ag{n})$.
The \emph{multi-agent system} is defined by $\model = (M\ag{1}|M\ag{2}|\ldots|M\ag{n},\ R\ag{1},\ldots, R\ag{n})$, i.e., the composition of the underlying modules, together with the agents' repertoires of choices.

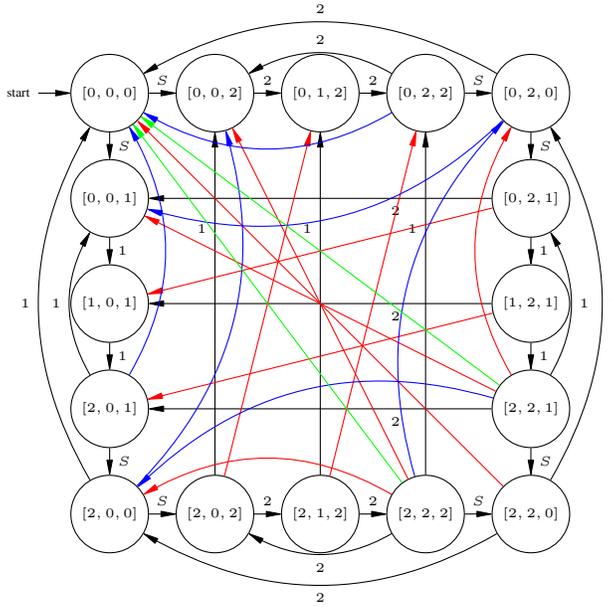
\begin{figure}[t]\centering
\begin{tikzpicture}[->,auto,>=arrow30,node distance=1.4cm,font=\tiny]
  \node[initial,state] (A1)                   {\tiny $[0,0,0]$};
  \node[state]         (B1) [below of=A1]     {$[0,0,1]$};
  \node[state]         (B2) [below of=B1]     {$[1,0,1]$};
  \node[state]         (B3) [below of=B2]     {$[2,0,1]$};
  \node[state]         (A2) [below of=B3]     {$[2,0,0]$};
  \node[state]         (C1) [right of=A1]     {$[0,0,2]$};
  \node[state]         (C2) [right of=C1]     {$[0,1,2]$};
  \node[state]         (C3) [right of=C2]     {$[0,2,2]$};
  \node[state]         (A3) [right of=C3]     {$[0,2,0]$};
  \node[state]         (D1) [below of=A3]     {$[0,2,1]$};
  \node[state]         (D2) [below of=D1]     {$[1,2,1]$};
  \node[state]         (D3) [below of=D2]     {$[2,2,1]$};
  \node[state]         (A4) [below of=D3]     {$[2,2,0]$};
  \node[state]         (E1) [right of=A2]     {$[2,0,2]$};
  \node[state]         (E2) [right of=E1]     {$[2,1,2]$};
  \node[state]         (E3) [right of=E2]     {$[2,2,2]$};
  
  \path (A1) edge node {$S$} (B1)
        (B1) edge node {$1$} (B2)
        (B2) edge node {$1$} (B3)
        (B3) edge node {$S$} (A2)
        (A2) edge [bend left] node {$1$} (A1)
        (B3) edge [bend left] node {$1$} (B1)
        (A1) edge node {$S$} (C1)
        (C1) edge node {$2$} (C2)
        (C2) edge node {$2$} (C3)
        (C3) edge node {$S$} (A3)
        (A3) edge [bend right] node [swap] {$2$} (A1)
        (C3) edge [bend right] node [swap] {$2$} (C1)
        (A3) edge node {$S$} (D1)
        (D1) edge node {$1$} (D2)
        (D2) edge node {$1$} (D3)
        (D3) edge node {$S$} (A4)
        (A4) edge [bend right] node [swap] {$1$} (A3)
        (D3) edge [bend right] node [swap] {$1$} (D1)
        (A2) edge node {$S$} (E1)
        (E1) edge node {$2$} (E2)
        (E2) edge node {$2$} (E3)
        (E3) edge node {$S$} (A4)
        (A4) edge [bend left] node {$2$} (A2)
        (E3) edge [bend left] node {$2$} (E1)
        (E1) edge node [yshift=1cm] {$1$} (C1)
        (E2) edge node [yshift=1cm] {$1$} (C2)
        (E3) edge node [yshift=1cm] {$1$} (C3)
        (D1) edge node [xshift=1cm] {$2$} (B1)
        (D2) edge node [xshift=1cm] {$2$} (B2)
        (D3) edge node [xshift=1cm] {$2$} (B3)
        (A4) edge [color=red] (A1)
        (E1) edge [color=red] (C2)
        (E2) edge [color=red] (C3)
        (E3) edge [color=red] (C1)
        (D1) edge [color=red] (B2)
        (D2) edge [color=red] (B3)
        (D3) edge [color=red] (B1)
        (E3) edge [color=red, bend right] (A2)
        (D3) edge [color=red, bend left] (A3)
        (A2) edge [color=blue, bend right] (C1)
        (E3) edge [color=blue, bend left] (A3)
        (A3) edge [color=blue, bend left] (B1)
        (D3) edge [color=blue, bend right] (A2)
        (E3) edge [color=green] (A1)
        (D3) edge [color=green] (A1)
        (B3) edge [color=blue, bend right] (A1)
        (C3) edge [color=blue, bend left] (A1)
        ;  
\end{tikzpicture}
\caption{Composition of modules: two trains $M\ag{1}, M\ag{2}$ and controller $M\ag{C(2)}$}
\label{fig:trains-composition}
\end{figure}

\begin{example}\label{ex:composition}
The composition $M\ag{1} | M\ag{2} | M\ag{C(2)}$ of two train modules $M\ag{1}, M\ag{2}$ and controller $M\ag{C(2)}$ is presented in Figure~\ref{fig:trains-composition}.
The asynchronous transitions are labelled by the agent performing the transitions.
All the synchronous transitions performed by both trains are in red, while the synchronous transitions performed by a controller with one of the trains are in blue.
There are two synchronous transition performed by all the agents, both in green.
\end{example}

\paragraph{Traces and Words. }\label{sec:traces}
A trace of a module $M$ is an infinite sequence of alternating states and transitions $\sigma=q_0\alpha_0 q_1\alpha_1\ldots$, where
$(q_i,\alpha_i,q_{i+1})\in T$ for every $i\in\Nat$ (note that $q_0$ is the initial state).
An infinite word $w=v_0 v_1,\ldots \in(D^X)^\omega$ is \emph{derived} by $M$ with trace $\sigma=q_0\alpha_0 q_1\alpha_1\ldots$ if $v_i = \lambda(q_i)$ for all $i\in\Nat$.
An infinite word $u=\alpha_0 \alpha_1,\ldots \in(D^I)^\omega$ is \emph{admitted} by $M$ with $\sigma$ if $\sigma=q_0\alpha_0 q_1\alpha_1\ldots$.
Finally, $w$ (resp.~$u$) is derived (resp.~admitted) by $M$ if there exists a trace of $M$ that derives (resp.~admits) it.

%====================================================================================
\section{What Agents Can Achieve}\label{sec:logic}
%====================================================================================

\emph{Alternating-time temporal logic} \ATLs~\cite{Alur02ATL,Schobbens04ATL}
introduces \emph{strategic modalities} $\coop{C}\gamma$, expressing that coalition $C$ can enforce the temporal property $\gamma$.
We use the semantics based on \emph{imperfect information strategies} with \emph{imperfect recall} ($\ir$) or \emph{perfect recall} ($\iR$)~\cite{Schobbens04ATL}.
Moreover, we only consider formulas without the next step operator $\Next$ due to its questionable interpretation for asynchronous systems, which are based on the notion of local clocks.

\para{Syntax.}
Formally, the syntax of $\ATLsX$ is as follows:
\begin{displaymath}
\phi ::= p(Y) \mid \neg\phi \mid \phi\land\phi \mid \coop{C} \gamma\, , \qquad\qquad
\gamma ::= \phi \mid \neg\gamma \mid \gamma\land\gamma \mid \gamma \Until \gamma
\end{displaymath}
where $p:Y\to D$ for some subset of domain variables $Y\subseteq X$. That is, each atomic statement refers to the valuation of variables used in the system.
$\Until$ is the ``strong until'' operator of \LTLX. The ``sometime'' and ``always'' operators can be defined as usual by $\Sometm\gamma \equiv \top \Until \gamma$ and $\Always\gamma \equiv \neg\Sometm\neg\gamma$.
The set of variables used by the formula $\gamma$ is denoted by $var(\gamma)$.

In most of the paper, we focus on formulas that consist of a single strategic modality followed by an \LTLX formula (i.e., $\coop{C}\gamma$, where $\gamma\in\LTLX$).
The corresponding fragment of \ATLsX, called \oneATLsX, suffices to express many interesting specifications, namely the ones that refer to agents' ability of enforcing trace properties (such as safety or reachability of a winning state).
Note that \oneATLsX has strictly higher expressive and distinguishing power than \LTLX. In fact, model checking \oneATLsX is equivalent to \LTLX controller synthesis, i.e., a variant of \LTL realizability.

Nested strategic modalities might be sometimes needed to refer to an agent’s ability to endow or deprive another agent with/of ability. 
We discuss assume-guarantee verification for such specifications in Section~\ref{sec:nested}.

\para{Strategies and Their Outcomes.}
Let $\model$ be a system composed of $n$ agents with asynchronous modules $M\ag{i} = (X\ag{i},I\ag{i},Q\ag{i},T\ag{i},\lambda\ag{i},q_{0}\ag{i})$ and repertoires $R\ag{i}$.

\begin{definition}[Strategies]
A \emph{memoryless strategy} for agent $i$ (\ir-strategy in short) is a function
$s_i^{\ir}:Q\ag{i}\to \powerset{\powerset{T\ag{i}}}$ such that $s_i^{\ir}(q\ag{i})\in R\ag{i}(q\ag{i})$ for every $q\ag{i}\in Q\ag{i}$.
That is, a memoryless strategy assigns a legitimate choice to each local state of $i$.

A \emph{perfect recall strategy} for $i$ (\iR-strategy in short) is a function $s_i^{\iR}:(Q\ag{i})^+\to T\ag{i}$ such that $s_i^{\iR}(q\ag{i}_1,\dots,q\ag{i}_k)\in R\ag{i}(q\ag{i}_k)$, i.e., it assigns choices to finite sequences of local states.
We assume that $s_i^{\iR}$ is stuttering-invariant, i.e., 
$$s_i^{\iR}(q\ag{i}_1,\dots,q\ag{i}_j,q\ag{i}_j,\dots,q\ag{i}_k) = s_i^{\iR}(q\ag{i}_1,\dots,q\ag{i}_j,\dots,q\ag{i}_k).$$
Note that the agent's choices in a strategy depend only on its \emph{local} states, thus being uniform by construction.

\end{definition}

Let $\sigma=q_0\alpha_0 q_1\alpha_1\ldots$ be a trace, where $q_j=(q_j\ag{1},q_j\ag{2},\ldots,q_j\ag{n})$ are global states in $Q\ag{1}\times\ldots\times Q\ag{n}$.
We say that $\sigma$ \emph{implements} strategy $s_i^{\ir}$
if, for any $j$ where $q_j\ag{i}\neq q_{j+1}\ag{i}$, we have $(q_j\ag{i},\alpha_j,q_{j+1}\ag{i})\in s_i^{\ir}(q_j\ag{i})$
where $\alpha_j:I\ag{i}\to D$ and $\alpha_j(x)=\lambda(q_j)(x)$.
A word $w=v_0v_1\dots$ \emph{implements} $s_i^{\ir}$
if it is derived by $\model$ with some trace $\sigma$ implementing $s_i^{\ir}$.
The definitions for $s_i^{\iR}$ are analogous.

\begin{definition}[Coalitional strategies]
Let $C\subseteq \{1,\ldots,n\}$ be a coalition of agents.
A \emph{joint memoryless strategy $s_C^{\ir}$} for $C$ is a collection of memoryless strategies $s_i^\ir$, one per $i\in C$.
We say that a trace $\sigma$ (respectively a word $w_\sigma$) \emph{implements} $s_C^{\ir}$ if it implements every strategy $s_i^{\ir}, i\in C$.
The definitions for joint perfect recall strategies are analogous.
Whenever a claim holds for both types of strategies, we will refer to them simply as ``strategies.''
\end{definition}

\para{Semantics.}
Let $x\in\set{\ir,\iR}$ be a strategy type. The semantics of \ATLsX is given below (we omit the standard clauses for Boolean operators etc.). By $w[i]$, we denote the $i$th item of sequence $w$, starting from $0$.

\begin{description2}
\item[{$\model,q \satisf[x] p(Y)$}] if $\lambda(q)|_Y = p(Y)$;

\item[{$\model,q \satisf[x] \coop{C} \gamma$}] if there exists an $x$-strategy $s_C$ for $C$ such that, for any word $w$ starting in $q$ that implements $s_C$, we have $\model,w \satisf \gamma$;

\item[{$\model,w \satisf \phi$}] if $\model,w[0] \models \phi$;

\item[{$\model,w \satisf \gamma_1 \Until \gamma_2$}] if there exists $j$ such that $\model,w[j,\infty] \satisf \gamma_2$, and $\model,w[i,\infty] \satisf \gamma_1$ for each $0\le i<j$.
\end{description2}

Finally, we say that $\model \satisf[x] \phi$ if $\model,q_0 \satisf[x] \phi$, where $q_0$ is the initial state of $\model$.

\begin{example}\label{ex:formulas}\label{ex:logic}
Let us consider the system $\model$ of Example~\ref{ex:composition}
and the \oneATLs formula
$\phi \equiv \coop{1,2} (GF p\ag{1}\land GF p\ag{2})$,
where $p\ag{i}(x\ag{i})=1$.
That is, $\phi$ says that trains $1,2$ have a strategy so that each visits the tunnel infinitely many times.
Consider the joint strategy $(\sigma_1,\sigma_2)$ with
$\sigma_i(w\ag{i})=\{(w\ag{i},i,T\ag{i})\}$,
$\sigma_i(T\ag{i})=\{(T\ag{i},i,A\ag{i})\}$, and
$\sigma_i(A\ag{i})=\{(A\ag{i},3-i,w\ag{i})\}$.
All the traces implementing $(\sigma_1,\sigma_2)$ alternate the visits of the trains in the tunnel,
making the \LTL formula $GF p\ag{1}\land GF p\ag{2}$ satisfied.
Thus, $\model \satisf[x] \phi$ for $x\in\set{\ir,\iR}$.

By the same strategy, we get 
$\model \satisf[x] \coop{1,2} (GF q\ag{1}\land GF q\ag{2})$, where $q\ag{i}(s)=i$.
\end{example}

%====================================================================================
\section{Assumptions and Guarantees}\label{sec:assumptions}
%====================================================================================

Our assume-guarantee scheme reduces the complexity of model checking by ``factorizing'' the task into verification of strategies of single agents with respect to abstractions of the rest of the system.
In this section, we formalize the notions of \emph{assumption} and \emph{guarantee}, which provide the abstractions in a way that allows for simulating the global behavior of the system.

\subsection{Assumptions}\label{sec:ass}

\begin{definition}[Assumption~\cite{Lomuscio13assGar}]
An \emph{assumption} or an \emph{extended module} $(M,F)=(X,I,Q,T,\lambda,q_0,F)$ is a module augmented with a set of accepting states $F\subseteq Q$.
\end{definition}
For assumptions, we use B\"{u}chi accepting conditions.
More precisely, the infinite word $w=q_0 q_1,\ldots$ is \emph{accepted}
by extended module $(M,F)$ with computation $u=\alpha_0\alpha_1\ldots$
if it is derived by $M$ with a trace $\sigma=q_0\alpha_0 q_1\alpha_1\ldots$ and
$\mathit{inf}(\sigma)\cap F\neq\es$.
Thus, the assumptions have the expressive power of $\omega$-regular languages.
In practical applications, it might be convenient to formulate actual assumptions in \LTL (which covers a proper subclass of $\omega$-regular properties).

The definitions of Sections~\ref{sec:models} and~\ref{sec:logic} generalize to assumptions in a straightforward way.
In particular, we can compose a module $M$ with an assumption $A'=(M',F')$, and obtain an extended composite module $A=(M|M',F)$,
where $F = \{(q,q')\in Q\times Q' \mid q'\in F'\}$.
We use the notation $A=M|A'$.
Moreover, let $\assmodel = (A,R\ag{1},\dots,R\ag{m})$ be a MAS based on the extended module $A$
with repertoires related to all components of $M$.
The semantics of \oneATLsX extends naturally:

\begin{description2}
\item[{$\assmodel,q \satisf[x] \coop{C} \phi$}] iff there exists an $x$-strategy $s_C$ for $C$ such that,
for any word $w=w[1]w[2]\ldots$ that implements $s_C$ and is accepted by $A$, we have $\assmodel,w \satisf[x] \phi$.
\end{description2}

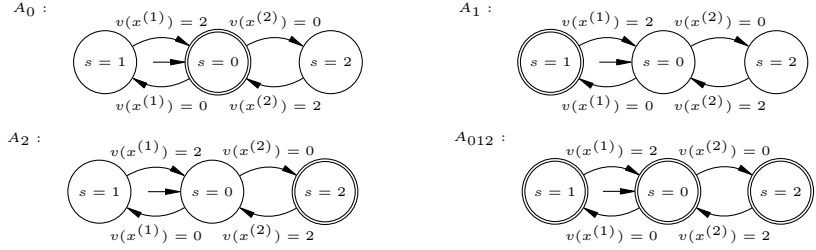
\begin{figure}[t]\centering
  \begin{center}
    \begin{tikzpicture}[->,auto,>=arrow30,node distance=1cm,font=\tiny]
    
    \node[initial,state,accepting,initial text=] (A)                           {$s=0$};
    \node[state]                                 (B) [left of=A, xshift=-5mm]  {$s=1$};
    \node[state]                                 (C) [right of=A, xshift=5mm]  {$s=2$};
    \node (X) [left of=B, yshift=7mm] {$A_0:$};
    
    \path (A) edge [bend left] node {$v(x\ag{1})=0$} (B)
              edge [bend left] node {$v(x\ag{2})=0$} (C)
          (B) edge [bend left] node {$v(x\ag{1})=2$} (A)
          (C) edge [bend left] node {$v(x\ag{2})=2$} (A);
    
    \end{tikzpicture}
    \hspace{1cm}
    \begin{tikzpicture}[->,auto,>=arrow30,node distance=1cm,font=\tiny]
    
    \node[initial,state,initial text=] (A)                           {$s=0$};
    \node[state,accepting]             (B) [left of=A, xshift=-5mm]  {$s=1$};
    \node[state]                       (C) [right of=A, xshift=5mm]  {$s=2$};
    \node (X) [left of=B, yshift=7mm] {$A_1:$};
    
    \path (A) edge [bend left] node {$v(x\ag{1})=0$} (B)
              edge [bend left] node {$v(x\ag{2})=0$} (C)
          (B) edge [bend left] node {$v(x\ag{1})=2$} (A)
          (C) edge [bend left] node {$v(x\ag{2})=2$} (A);
    
    \end{tikzpicture}
    \end{center}
    \vspace*{-7mm}
    \begin{center}
    \begin{tikzpicture}[->,auto,>=arrow30,node distance=1cm,font=\tiny]
    
    \node[initial,state,initial text=] (A)                           {$s=0$};
    \node[state]                       (B) [left of=A, xshift=-5mm]  {$s=1$};
    \node[state,accepting]             (C) [right of=A, xshift=5mm]  {$s=2$};
    \node (X) [left of=B, yshift=7mm] {$A_2:$};
    
    \path (A) edge [bend left] node {$v(x\ag{1})=0$} (B)
              edge [bend left] node {$v(x\ag{2})=0$} (C)
          (B) edge [bend left] node {$v(x\ag{1})=2$} (A)
          (C) edge [bend left] node {$v(x\ag{2})=2$} (A);
    
    \end{tikzpicture}
    \hspace{1cm}
    \begin{tikzpicture}[->,auto,>=arrow30,node distance=1cm,font=\tiny]
    
    \node[initial,state,accepting,initial text=] (A)                           {$s=0$};
    \node[state,accepting]                       (B) [left of=A, xshift=-5mm]  {$s=1$};
    \node[state,accepting]                       (C) [right of=A, xshift=5mm]  {$s=2$};
    \node (X) [left of=B, yshift=7mm] {$A_{012}:$};
    
    \path (A) edge [bend left] node {$v(x\ag{1})=0$} (B)
              edge [bend left] node {$v(x\ag{2})=0$} (C)
          (B) edge [bend left] node {$v(x\ag{1})=2$} (A)
          (C) edge [bend left] node {$v(x\ag{2})=2$} (A);
    
    \end{tikzpicture}
    \end{center}    
\vspace*{-5mm}
\caption{Assumptions for the railway scenario}
\label{fig:assumptions-trains}
\vspace*{-2mm}
\end{figure}

\begin{example}\label{ex3:assumption}
Recall module $M\ag{C(2)}=(X,I,Q,T,\lambda,q_0)$ of the controller for 2 trains, with $Q=\{r,g\ag{1},g\ag{2}\}$.
We define four different assumptions about the behavior of the rest of the system, depicted graphically in Figure~\ref{fig:assumptions-trains}:
\begin{itemize2}
\item $A_0=(X,I,Q,T,\lambda,q_0,\{r\})$
\item $A_1=(X,I,Q,T,\lambda,q_0,\{g\ag{1}\})$
\item $A_2=(X,I,Q,T,\lambda,q_0,\{g\ag{2}\})$
\item $A_{012}=(X,I,Q,T,\lambda,q_0,\{r,g\ag{1},g\ag{2}\})$.
\end{itemize2}

Note that we can identify each valuation with an element of the set $\{0,1,2\}$, i.e., the value of the only variable $s$.
This way $A_0$ as well as $A_{012}$ accept all infinite words of the $\omega$-regular language
$L = (0(1|2))^\omega$, while $A_1$ and $A_2$ accept only proper subsets of this language,
namely $L\setminus(0(1|2))^*(02)^\omega$ and $L\setminus(0(1|2))^*(01)^\omega$.
\end{example}

\subsection{Guarantees}\label{sec:guar}

We say that a sequence $v=v_1v_2\ldots$ over $D^Y$ is
a \emph{curtailment} of a sequence $u=u_1u_2\ldots$ over $D^X$ (where $Y\subseteq X$)
if there exists an infinite sequence $c$ of indices $c_0<c_1<...$ with $c_0=0$
such that $\forall_i \forall_{c_i\leq k<c_{i+1}} v_i=u_k|_Y$.
We will denote a curtailment of $u$ to $D^Y$ by $u|_Y$ or $u|_Y^c$,
and use it to abstract away from irrelevant variables and the stuttering of states.

\begin{definition}[Guarantee]
Let $M\ag{1}, \ldots, M\ag{k}$ be pairwise asynchronous modules, and
$A=(X\ag{A},I\ag{A},Q\ag{A},T\ag{A},\lambda\ag{A},q\ag{A}_0,F\ag{A})$ be an assumption
with $X\ag{A}\subseteq X=\bigcup_{i=1}^k X\ag{i}$ and $I\ag{A}\subseteq I=\bigcup_{i=1}^k I\ag{i}$.

We say that $M=M\ag{1}|\ldots|M\ag{k}$ \emph{guarantees} the assumption $A$ (denoted $M\models A$)
if, for every infinite trace $\sigma$ of $M$ with $w\in (D^X)^\omega$ derived by $M$ with $\sigma$ and $u\in (D^I)^\omega$ admitted by $M$ with $\sigma$,
there exists a curtailment $w|_{X\ag{A}}^c$ ($c=c_1,c_2,\ldots$) accepted by $A$ with the computation $u_{c_1-1}|_{I\ag{A}}\;u_{c_2-1}|_{I\ag{A}}\;\dots$ .
\end{definition}
That is, every trace of $M$ must agree on the values of $X\ag{A}$ with some trace in $A$, modulo stuttering.

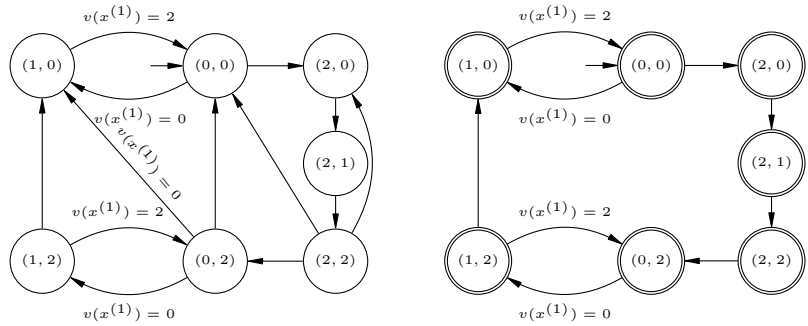
\begin{figure}[t]\centering
\begin{tikzpicture}[->,auto,>=arrow30,node distance=1.3cm,font=\tiny]
  \node[initial,state, initial text=]  (A)               {$(0,0)$};
  \node[state](B) [left of=A, xshift=-10mm]   {$(1,0)$};
  \node[state]          (C) [right of=A, xshift=3mm]  {$(2,0)$};
  \node[state]          (D) [below of=C, yshift=0mm]  {$(2,1)$};
  \node[state]          (E) [below of=D, yshift=0mm]  {$(2,2)$};
  \node[state]          (F) [left of=E, xshift=-3mm]   {$(0,2)$};
  \node[state](G) [left of=F, xshift=-10mm]   {$(1,2)$};
  
  \path (A) edge [bend left] node [xshift=.15cm] {$v(x\ag{1})=0$}      (B)
            edge             (C) 
        (B) edge [bend left] node {$v(x\ag{1})=2$}      (A)
        (F) edge [bend left] node {$v(x\ag{1})=0$}      (G)
        (G) edge [bend left] node [xshift=-.15cm] {$v(x\ag{1})=2$}      (F)
        (C) edge             (D) 
        (D) edge             (E) 
        (E) edge             (F) 
        (G) edge             (B) 
        (F) edge [right]     (A) 
        (F) edge             node [above, rotate = -50] {\;\;\;\;\;\;\;$v(x\ag{1})=0$} (B)
        (E) edge [bend right](C) 
        (E) edge             (A) 
        ;  
\end{tikzpicture}
\;\;\;\;\;\;\;\;
\begin{tikzpicture}[->,auto,>=arrow30,node distance=1.3cm,font=\tiny]
  \node[initial,state,accepting,initial text=]  (A)               {$(0,0)$};
  \node[state,accepting](B) [left of=A, xshift = -10mm]   {$(1,0)$};
  \node[state,accepting]          (C) [right of=A, xshift = 3mm]  {$(2,0)$};
  \node[state,accepting]          (D) [below of=C, yshift = 0mm]  {$(2,1)$};
  \node[state,accepting]          (E) [below of=D, yshift = 0mm]  {$(2,2)$};
  \node[state,accepting]          (F) [left of=E, xshift = -3mm]   {$(0,2)$};
  \node[state,accepting](G) [left of=F, xshift = -10mm]   {$(1,2)$};
  
  \path (A) edge [bend left] node {$v(x\ag{1})=0$}      (B)
            edge             (C) 
        (B) edge [bend left] node {$v(x\ag{1})=2$}      (A)
        (F) edge [bend left] node {$v(x\ag{1})=0$}      (G)
        (G) edge [bend left] node {$v(x\ag{1})=2$}      (F)
        (C) edge             (D) 
        (D) edge             (E) 
        (E) edge             (F) 
        (G) edge             (B) 
        ;
\end{tikzpicture}
\vspace*{-1mm}
\caption{Module $M\ag{C(2)}|M\ag{2}$ (left). The edges are labeled only if the value of $x\ag{1}$ is relevant.
Subsystem $M\ag{2}|A_{012}$ implementing strategy $\sigma$ (right).}
\vspace*{-2mm}
\label{fig:agv1}
\label{fig:composition-ass}
\end{figure}

\begin{example}\label{ex:modelassum}
Consider the system $M\ag{C(2)}|M\ag{1}|M\ag{2}$ presented in Example~\ref{ex:composition},
its subsystem $M\ag{C(2)}|M\ag{2}$ from Figure~\ref{fig:composition-ass},
and the assumption $A_{012}$ of Example~\ref{ex3:assumption}.

If we focus on the changes $s$, the following words can be derived:\
$(0(1|2))^\omega$ for the trains taking turns in the tunnel forever,\
$(0(1|2))^*01^\omega$ for the traces where the semaphore is stuck in state $(1,0)$ because it never receives that $v(x\ag{1})=2$,\
and $(0(1|2))^*02^\omega$ for ones that cycle forever in the right-hand part of $M\ag{C(2)}|M\ag{2}$.
In consequence, we have that $M\ag{C(2)}|M\ag{2}\models A_{012}$,
but not $M\ag{C(2)}|M\ag{2}\models A_{1}$.
\end{example}

It is possible to relate the traces of a subsystem with the traces of the entire system in such a way
that it is possible to verify locally defined formulas.

%====================================================================================
\section{\mbox{Assume-Guarantee Reasoning for 1ATL*}}\label{sec:agv-single-agents}
%====================================================================================

Now we propose our assume-guarantee schemes that decompose abilities of coalition $C$ into abilities of its subcoalitions, verified in suitable abstractions of their neighbor modules.

\subsection{Assume-Guarantee Rule for Strategies}\label{sec:agv-individual}

Let $\model$ be a system composed of asynchronous agents $(M\ag{1},R\ag{1}),\ \dots,\ (M\ag{n},R\ag{n})$.
By $N\ag{i}_1$, we denote the direct ``neighborhood'' of agent $i$, i.e., the set of agent indices $j$ such that $I_{M\ag{j}}\cap X_{M\ag{i}}\neq\es$ or $I_{M\ag{i}}\cap X_{M\ag{j}}\neq\es$.
By $N\ag{i}_k$, we denote the agents connected to $i$ in at most $k$ steps, i.e., $(N\ag{i}_{k-1}\cup\bigcup_{j\in N\ag{i}_{k-1}}N\ag{j}_1) \setminus \{i\}$.
Finally $\compos\ag{i}_k$ denotes the composition of all modules of $N\ag{i}_k$.
That is, if $N\ag{i}_k=\{a_1,...,a_m\}$ then $\compos\ag{i}_k=M\ag{a_1} | ... | M\ag{a_m}$.

Let $\psi_{i}$ be an \LTL formula (without ``next''), where atomic propositions are local valuations of variables in $M\ag{i}$.
Also, let $x\in\{\ir,\iR\}$.
The scheme is formalized through a sequence of rules $\bf{R_k}$ which rely on the behaviour of the neighbourhoods of coalition $C$, limited by ``distance'' $k$:

\[
\bf{R_k}\;\;\;
\begin{array}{c}
\forall_{i\in C}\; (M\ag{i} | A_i, R\ag{i}) \,\satisf[x]\, \coop{i} \psi_{i}\\
\forall_{i\in C}\; \compos\ag{i}_k \,\models\, A_i\\
\hline
(M\ag{1} | ... | M\ag{n},R\ag{1},\dots,R\ag{n}) \,\satisf[x]\, \coop{C} \bigwedge_{i\in C}\psi_{i}
\end{array}
\]

The main challenge in applying the scheme is to define the right assumptions and to decompose the verified formula.

\begin{example}\label{ex:rksound}
Recall the multi-agent system $\model$ presented in Example~\ref{ex:composition}, based on module $M\ag{C(2)}|M\ag{1}|M\ag{2}$.
We already argued that it satisfies
$\phi \equiv \coop{1,2} (GF p\ag{1})\land (GF p\ag{2})$
as well as $\phi' \equiv \coop{1,2} (GF q\ag{1})\land (GF q\ag{2})$,
for $p\ag{i}(x\ag{i})=1$ and
$q\ag{i}(s)=i$, cf.~Example~\ref{ex:logic}.
We will now see if the verification of the formulas can be decomposed using $\bf{R_k}$.

By Example~\ref{ex:modelassum} we know that $M\ag{C(2)}|M\ag{i}\models A_{012}$, where $A_{012}$ was an assumption defined in Example~\ref{ex3:assumption}.
It is easy to see that $M\ag{C(2)}\models A_{012}$.

Consider the extended module $M\ag{2}|A_{012}$, which is nothing but $M\ag{2}|M\ag{C(2)}$ with all the states marked as accepting.
Assume further that agent $2$ executes strategy $\sigma_2$ of Example~\ref{ex:logic}.
The resulting subsystem is presented in Figure~\ref{fig:agv1}.
Note that, if we focus on the values of variable $s$,
the $\omega$-regular language accepted by this automaton is $((01)|(0222(01)^*011))^\omega$, hence it periodically satisfies $p(\{s\})=1$.
In consequence, $\sigma_2$ can be used to demonstrate that
$(M\ag{2}|A_{012}, R\ag{2}) \satisf[\ir] \coop{2} GFq\ag{1}$, where $q\ag{1}(s)=1$.
Similarly, $(M\ag{1}|A_{012}, R\ag{1}) \satisf[\ir] \coop{1} GFq\ag{2}$, where $q\ag{2}(s)=2$.

As a result, we have decomposed formula $\phi'$ and constructed independent strategies for agents $1$ and $2$.
By the use of rule ${\bf R_1}$, we
conclude that $$(M,R\ag{C(2)},R\ag{1},R\ag{2}) \satisf[\ir] \coop{1,2}(GF q\ag{1})\land (GF q\ag{2}).$$

The situation for $\phi \equiv \coop{1,2} (GF p\ag{1})\land (GF p\ag{2})$ is drastically different.
We cannot use the analogous reasoning, because $\coop{i}GF p\ag{3-i}$ is not a local constraint for $M\ag{i}$.
There is a unique decomposition of $\phi$ into local constraints, but proving that $(M\ag{1}|A_{012}, R\ag{1}) \satisf[\ir] \coop{1} GFp\ag{1}$ fails,
as the system can get stuck in the state where $s$ equals $2$ or infinitely loop between the states where $s=2$ and $s=0$.
Changing the assumption would not help, since we cannot avoid the infinite exclusion of the considered train.
Thus, while the scheme can be used to derive that $\model \satisf[\ir] \phi'$, it cannot produce the (equally true) statement $\model \satisf[\ir] \phi$.
\end{example}

\subsection{Soundness and Incompleteness}\label{sec:soundness}

The following theorem says that, if each coalition member together with its assumption satisfies the decomposition of the formula, and its neighborhood satisfies the assumption, then the original verification task must return ``true.''

\begin{theorem}\label{t:rksound}
The rule $\bf{R_k}$ is sound.
\end{theorem}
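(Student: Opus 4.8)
The plan is to show that from the premises one can assemble a joint strategy $s_C$ for the coalition $C$ in the full system $\model = (M\ag{1}|\dots|M\ag{n},R\ag{1},\dots,R\ag{n})$ that enforces $\bigwedge_{i\in C}\psi_i$. The natural candidate is the composition of the individual witnesses: for each $i\in C$, the premise $(M\ag{i}|A_i,R\ag{i})\satisf[x]\coop{i}\psi_i$ yields an $x$-strategy $s_i$ for agent $i$ (which is a strategy on $Q\ag{i}$-sequences, hence meaningful both in $M\ag{i}|A_i$ and in $\model$), and we take $s_C = \{s_i\}_{i\in C}$. I would then fix an arbitrary word $w$ of $\model$ starting in $q_0$ that implements $s_C$, fix an arbitrary $i\in C$, and argue that $\model,w\satisf \psi_i$; since $i$ is arbitrary this gives $\model,w\satisf\bigwedge_{i\in C}\psi_i$, and since $w$ is arbitrary this gives $\model\satisf[x]\coop C\bigwedge_{i\in C}\psi_i$.

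The crux is the following projection/transfer step. Let $\sigma$ be the trace of $\model$ that derives $w$ and implements $s_C$. First restrict attention to the ``$i$-relevant'' part of the system: the modules in $N\ag{i}_k$ — whose composition is $\compos\ag{i}_k$ — move according to some sub-trace, and because composition is defined so that an agent's transitions depend only on the input variables read from its neighbours (Definition~\ref{d:comp}), the behaviour of $M\ag{i}$ inside $\model$ is the same as its behaviour inside $\compos\ag{i}_k \,|\, M\ag{i}$ once the values of the shared variables are fixed along $\sigma$. Now invoke the second premise, $\compos\ag{i}_k\models A_i$: the trace $\sigma$ projected to $\compos\ag{i}_k$ has a curtailment over $X\ag{A_i}$ that is accepted by $A_i$ with the appropriate computation. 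Combining this accepting run of $A_i$ with the local trace of $M\ag{i}$ along $\sigma$ produces a trace $\sigma'$ of the extended module $M\ag{i}|A_i$ which (i) still implements $s_i$ — because $s_i$ only inspects $Q\ag{i}$-states and these are unchanged — (ii) is accepted by $M\ag{i}|A_i$ (its $F$-states are exactly those whose $A_i$-component lies in $F\ag{A_i}$, and the curtailment witnesses $\mathit{inf}\cap F\ag{A_i}\neq\es$), and (iii) derives a word $w'$ over $X\ag{M\ag{i}}$ that is a curtailment of $w$, i.e.\ agrees with $w$ on $var(\psi_i)\subseteq X\ag{i}$ modulo stuttering.

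From the first premise applied to this $w'$: since $w'$ implements $s_i$ and is accepted by $M\ag{i}|A_i$, we get $(M\ag{i}|A_i,R\ag{i}),w'\satisf[x]\psi_i$. Finally I would discharge the last gap — lifting $\psi_i$ from $w'$ back to $w$ — by the stuttering-invariance of $\LTLX$ (this is exactly why the logic forbids $\Next$): a curtailment preserves satisfaction of every $\LTLX$ formula over the relevant variables, so $\model,w\satisf\psi_i$. This transfer argument, and in particular making precise that a curtailment does not disturb the implementation of a memoryless or perfect-recall strategy (for $\iR$ one uses the stuttering-invariance clause imposed in the definition of $s_i^{\iR}$) and does not disturb $\LTLX$-satisfaction, is the main obstacle; the rest is bookkeeping over Definitions~\ref{d:comp} and the curtailment/guarantee definitions. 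I would state the needed ``curtailment lemma'' (stuttering-invariance of $\LTLX$, and invariance of strategy-implementation under curtailment) as an auxiliary lemma and use it as a black box here, keeping the proof of the theorem itself to the three-step skeleton above: build $s_C$, project $\sigma$ onto $\compos\ag{i}_k$ and pull back through $A_i$, then apply the local premise and the curtailment lemma.
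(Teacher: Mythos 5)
Your proposal is correct and follows essentially the same route as the paper's proof: build the joint strategy from the individual witnesses, curtail a global trace to the local extended module $M\ag{i}|A_i$, use the guarantee premise to show the curtailment is accepted by $A_i$, and transfer satisfaction of $\psi_i$ back via stuttering invariance of $\LTLX$. The paper merely phrases this as a proof by contradiction and leaves the stuttering-invariance/strategy-preservation step implicit, which you rightly isolate as an explicit ``curtailment lemma.''
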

\begin{proof}
Let $\forall_{i\in C}\; (M\ag{i} | A_i,R\ag{i}) \,\models_{x}\, \coop{i} \psi_i$
with (memoryless or perfect recall) imperfect information strategy $\sigma_i$
and $\forall_{i\in C}\; \compos\ag{i}_k \,\models\, A_i$.
Here and in the rest of the proof, $x\in\{ir,iR\}$.

Let us consider $M=M\ag{1} | ... | M\ag{n}$ such that $(M,R\ag{1},\ldots,R\ag{n})\models_{x} \coop{C} \psi_i$ and fix its joint strategy
$\sigma$ for coalition $C$, where $\sigma(i)=\sigma_i$ for every $i\in C$.

We will prove the soundness by contradiction.
Suppose that for every (memoryless or perfect recall) imperfect information joint strategy
there exists an infinite word which implements
this joint strategy, but do not satisfy
$\bigwedge_{i\in C}\psi_i$, i.e. there exists $j\in C$ such that
$w$ does not satisfy $\psi_j$.
Let $w=q_0q_1\ldots$ be such a word for the strategy $\sigma$ and fix $j$.

Let us consider $M\ag{j} | A_j$, where $X_{M\ag{j}}$ and $X_{A_j}$ are internal variables of
$M\ag{j}$ and $A_j$, appropriately.
By the construction and the presumption that
$(M\ag{j} | A_j,\R\ag{j}) \,\models_{x}\, \coop{j} \psi_j$
we get that every infinite word
over $X_{M\ag{j}}\cup X_{A_j}$ which implement (memoryless or perfect recall) imperfect information
strategy $\sigma_j$ satisfy $\psi_j$.

However, the assumption $A_j$
is guaranteed by $\compos\ag{j}_k$,
hence for a word derived by $\compos\ag{j}_k$ we have its curtailment accepted by $A_j$.
Moreover, every word accepted by $M\ag{j} | A_j$ is a curtailment of a word derived by $M$,
and, in particular, $w$ is such a word.
However, 
there exists a curtailment $w|_{X_{M\ag{j}}\cup X_{A_j}}$
which satisfy strategy $\sigma_j$ but is not accepted by $M\ag{j} | A_j$,
which gives an obvious contradiction with
$(M\ag{j} | A_j,R\ag{j}) \,\models_{x}\, \coop{i} \psi_j$.

The obtained contradiction shows that there exists a joint strategy $\sigma$ for the entire model and
$(M\ag{1} | ... | M\ag{n},R\ag{1},\ldots,R\ag{n}) \models_{x} \coop{C} \bigwedge_{i\in C}\psi_i$, which concludes the proof.
\end{proof}

Unfortunately, there does not always exist $k<n$ for which the rule $\bf{R_k}$ is complete,
even in a very weak sense, where we only postulate the \emph{existence} of appropriate assumptions.

\begin{theorem}
The scheme consisting of rules $\set{\bf{R_k} \mid k\in\Nat}$ is in general not complete.
\end{theorem}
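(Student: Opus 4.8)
The plan is to exhibit a single family of systems, parameterized by $n$, on which the scheme provably fails for every $k < n$, and then observe that $\bf{R_n}$ (if it were admissible) would be useless since it looks at the whole system anyway. The natural candidate is a ``ring'' or ``chain'' of $n$ agents in which the coalition is $C = \{1,\dots,n\}$ (or a small subcoalition), the global formula $\coop{C}\gamma$ holds, but the ability is genuinely \emph{global}: no agent can be held responsible for any non-trivial local fragment of $\gamma$, because the part of the objective it can influence depends on the behaviour of an agent that lies more than $k$ steps away in the neighbourhood graph. The TGC-style counterexample sketched in Example~\ref{ex:rksound} already shows the seed of this: there $\coop{i}GF p\ag{i}$ fails in every abstraction $M\ag{i}|A_i$ because the relevant train can be starved, and ``changing the assumption would not help, since we cannot avoid the infinite exclusion of the considered train.'' I would scale that phenomenon up to a chain of $n$ trains and controllers so that the only honest decomposition of $\bigwedge_i \psi_i$ forces each $\psi_i$ to constrain a variable controlled at distance $> k$.

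First I would fix the model: take $n$ agents arranged so that the neighbourhood graph has diameter $n-1$ (e.g.\ a line $M\ag{1}-M\ag{2}-\cdots-M\ag{n}$ where consecutive modules share one variable). Choose the coalition $C$ and an \LTLX objective $\gamma$ such that (i) $\model \satisf[x] \coop{C}\gamma$ — the whole coalition, seeing all local states, has a joint strategy enforcing $\gamma$ (this is a straightforward scheduling/turn-taking argument as in Example~\ref{ex:logic}); and (ii) for \emph{every} way of writing $\gamma$ (up to logical equivalence) as $\bigwedge_{i\in C}\psi_i$ with $\psi_i$ over the local variables of $M\ag{i}$, at least one conjunct $\psi_j$ is \emph{not} a local constraint in the sense required — i.e.\ no abstraction $A_j$ of $\compos\ag{j}_k$ with $\compos\ag{j}_k \models A_j$ makes $(M\ag{j}|A_j, R\ag{j}) \satisf[x] \coop{j}\psi_j$ true. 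The second point is the crux: I would argue that because $\compos\ag{j}_k$ omits some module $M\ag{\ell}$ with $\ell \notin N\ag{j}_k$, any assumption $A_j$ \emph{soundly} over-approximating $\compos\ag{j}_k$ must also admit traces in which $M\ag{\ell}$'s behaviour is arbitrarily adversarial — in particular traces that starve agent $j$ of the moves it needs — so $\coop{j}\psi_j$ fails in $(M\ag{j}|A_j)$. This uses the definition of guarantee: $\compos\ag{j}_k \models A_j$ constrains $A_j$ only on the variables of $N\ag{j}_k$, hence $A_j$ is free to be worst-case on everything else.

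The key steps, in order, are: (1) define the $n$-agent chain and verify $\model\satisf[x]\coop{C}\gamma$ by explicitly giving the coalitional strategy; (2) enumerate the (finitely many, up to equivalence) decompositions $\gamma \equiv \bigwedge_{i\in C}\psi_i$ into locally-scoped conjuncts, and show each is ``bad'' for some $j$; (3) for that $j$ and an arbitrary $A_j$ with $\compos\ag{j}_k \models A_j$, construct a trace of $\compos\ag{j}_k$ — equivalently an admissible input word for $M\ag{j}$ — whose curtailment is accepted by $A_j$ but on which $\sigma_j$ cannot enforce $\psi_j$, regardless of $\sigma_j$; conclude the first premise of $\bf{R_k}$ is unsatisfiable; (4) note $k$ ranged over $\{0,\dots,n-1\}$ arbitrarily, and $k\ge n$ is excluded by the statement (``$k<n$''), so no rule in the scheme derives $\model\satisf[x]\coop{C}\gamma$, establishing incompleteness. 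The main obstacle is step (2)–(3): I must show the failure is robust over \emph{all} admissible assumptions, not just the natural ones, which means characterizing precisely what $\compos\ag{j}_k \models A_j$ does and does not force about $A_j$, and then using an adversarial-scheduling argument (the distant module $M\ag{\ell}$ can loop forever, cf.\ the ``$s$ gets stuck at $2$'' behaviour in Example~\ref{ex:rksound}) to defeat every candidate local strategy $\sigma_j$. A secondary subtlety is handling both strategy types $x\in\{\ir,\iR\}$ uniformly — but since the counterexample is built on starvation, which perfect recall cannot cure, the same construction works for both.
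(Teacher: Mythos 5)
There is a genuine gap in your argument, and it sits exactly at the hardest instances of the scheme. The theorem quantifies over all $k\in\Nat$, and your key mechanism in step (3) --- ``$\compos\ag{j}_k$ omits some module $M\ag{\ell}$ with $\ell\notin N\ag{j}_k$, hence any admissible $A_j$ is worst-case on it'' --- evaporates once $k$ reaches the diameter of the neighbourhood graph: in your own chain of diameter $n-1$, already for $k=n-1$ every agent's neighbourhood contains all the other agents, so there is no omitted $M\ag{\ell}$ left to be adversarial. By Proposition~\ref{p:enlarge} the set of admissible assumptions only grows with $k$, so the saturated case is precisely where the rule is strongest and where your proof must still close --- but as written it does not. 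Your side remark that ``$k\ge n$ is excluded by the statement'' is not a way out: those rules belong to the scheme (they merely coincide with the saturated one), and the problem already arises at $k=n-1<n$.

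The correct obstruction --- which the paper's proof exploits by simply pointing back at Example~\ref{ex:rksound} --- is not locality of the assumption but the over-approximation forced by the guarantee premise itself: $\compos\ag{j}_k\models A_j$ requires $A_j$ to accept (a curtailment of) \emph{every} trace of the neighbourhood composition taken in isolation, and for the two-trains system of Example~\ref{ex:composition} this set already contains runs such as $(02)^\omega$ in which train $1$ is starved forever --- for every $k$, hence for every admissible $A_j$. Since $GF p\ag{1}\land GF p\ag{2}$ admits only one decomposition into conjuncts over local variables, namely $\psi_i = GF p\ag{i}$, the first premise of $\bf{R_k}$ is unsatisfiable for every $k$ and every choice of assumptions, so this fixed three-module system is already a complete counterexample and no parameterized family is needed. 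Your proposal does brush against the right idea (the ``$s$ gets stuck at $2$'' remark), but attributes the starvation to modules lying outside the $k$-neighbourhood rather than to the inherent impossibility of an \emph{individual} agent enforcing a goal that requires coalition-internal turn-taking; repairing step (3) means replacing the distance argument by this one, at which point the parameterization by $n$ buys you nothing.
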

\begin{proof}
Follows directly from Example~\ref{ex:rksound}.
\end{proof}

%====================================================================================
\subsection{Coalitional Assume-Guarantee Verification}\label{sec:agv-coalitions}
%====================================================================================

In Section~\ref{sec:soundness}, we showed that achievable coalitional goals may not decompose into achievable individual subgoals.
As a result, the scheme proposed in Section~\ref{sec:agv-individual} is incomplete.
A possible way out is to allow for assume-guarantee reasoning about joint strategies of subcoalitions of $C$.
We implement the idea by partitioning the system into smaller subsystems and allowing to explicitly consider the cooperation between coalition members.

Again, let $\model=(M\ag{1},R\ag{1}),\ \dots,\ (M\ag{n},R\ag{n})$ be a system composed of asynchronous agents .
Moreover, let $\{P_1, \ldots, P_k : P_i\subseteq\{1,2,\ldots,n\}\}$,
be a partitioning of coalition $C$,
and let $\overline{C}=\{i : i\notin C\}=Ag\setminus C$ be the set of opponents of $C$.
By $\model\ag{P_i}$ we denote the system composed of all the agents in $P_i = \{i_1,\ldots,i_s\}$, i.e., $(M\ag{P_i}=M\ag{i_1}|\ldots|M\ag{i_s}, R\ag{i_1},\ldots,R\ag{i_s})$.
$\model\ag{\overline{C}}$ is defined analogously.

We extend the notion of neighbourhood to sets of agents as follows:\
\begin{itemize}
\item $N^{P_i}_1=(\bigcup_{i\in P_i}N\ag{i}_1)\setminus P_i$,\
$N^{P_i}_k=(N^{P_i}_{k-1}\cup\bigcup_{j\in N^{P_i}_{k-1}}N\ag{j}_1) \setminus P_i$ for $k>1$,\ 
\item $\compos^{P_i}_k=M\ag{x_1} | ... | M\ag{x_s}$ for $N^{P_i}_k=\{x_1,...,x_s\}$.
\end{itemize}

Let $x\in \{\ir,\iR\}$. The generalized assume-guarantee rule is defined below:

\[
\bf{Part^P_k}\;\;\;
\begin{array}{c}
\forall_{P_i\in P}\; (M\ag{P_i} | A_i,R\ag{i_1},\ldots,R\ag{i_s}) \,\models_{x}\, \coop{P_i} \bigwedge_{j\in P_i}\psi_j\\
\forall_{P_i\in P}\; \compos^{P_i}_k \,\models\, A_i\\
\hline
(M\ag{1} | ... | M\ag{n},R\ag{1},\ldots,R\ag{n}) \models_{x} \coop{C} \bigwedge_{i\in C}\psi_i
\end{array}
\]

As it turns out, the new scheme is sound, conservative with respect to enlarging the neighborhood, and complete.

\begin{theorem}\label{t:partsound}
The rule $\bf{Part^P_k}$ is sound.
\end{theorem}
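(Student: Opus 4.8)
The plan is to mimic the proof of Theorem~\ref{t:rksound} (soundness of $\bf{R_k}$), but carry the joint strategies of the subcoalitions $P_i$ through the argument instead of individual strategies. First I would unpack the hypotheses: for each block $P_i$ of the partition $P$ of $C$ there is an $x$-strategy $\sigma_{P_i}$ for $P_i$ witnessing $(M\ag{P_i}|A_i,R\ag{i_1},\ldots,R\ag{i_s})\satisf[x]\coop{P_i}\bigwedge_{j\in P_i}\psi_j$, and $\compos^{P_i}_k\models A_i$. Since the blocks $P_i$ are pairwise disjoint and their union is $C$, the collection $\{\sigma_{P_i}\}_{P_i\in P}$ assembles into a single joint $x$-strategy $\sigma_C$ for $C$ (each agent in $C$ lies in exactly one block, so there is no conflict; the uniformity/stuttering-invariance conditions are inherited block-wise). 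This $\sigma_C$ is the witness we will use for the conclusion $\coop{C}\bigwedge_{i\in C}\psi_i$.

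The core of the argument is again by contradiction. Suppose some infinite word $w=q_0q_1\ldots$ derived by $M=M\ag{1}|\ldots|M\ag{n}$ implements $\sigma_C$ yet fails $\bigwedge_{i\in C}\psi_i$; pick a block $P_j$ and an index $\ell\in P_j$ with $w\not\models\psi_\ell$. Consider the extended module $M\ag{P_j}|A_j$. The key lemma (the same one invoked implicitly in Theorem~\ref{t:rksound}) is that because $\compos^{P_j}_k\models A_j$, and $\compos^{P_j}_k$ is exactly the composition of all modules in the $k$-neighborhood $N^{P_j}_k$ of the block, every trace of $M$ projects — via curtailment to the variables $X_{M\ag{P_j}}\cup X_{A_j}$, modulo stuttering — to a word accepted by $M\ag{P_j}|A_j$. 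The definition of guarantee gives a curtailment of the $\compos^{P_j}_k$-behavior accepted by $A_j$, and composing with $M\ag{P_j}$ (which shares its variables only inside $N^{P_j}_k$, by definition of the neighborhood) yields the corresponding curtailment $w|_{X_{M\ag{P_j}}\cup X_{A_j}}^c$ accepted by $M\ag{P_j}|A_j$. This curtailment still implements $\sigma_{P_j}$ (curtailment only removes stuttering steps, to which the strategies are invariant, and the choices of agents in $P_j$ depend solely on their local states). Since $\psi_\ell$ is an \LTLX formula over the local variables of $M\ag{\ell}\subseteq M\ag{P_j}$ and \LTLX formulas are stuttering-insensitive, $w\not\models\psi_\ell$ forces the curtailment to falsify $\psi_\ell$ as well — contradicting $(M\ag{P_j}|A_j,R\ag{i_1},\ldots,R\ag{i_s})\satisf[x]\coop{P_j}\bigwedge_{j\in P_j}\psi_j$ with witness $\sigma_{P_j}$. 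Hence no such $w$ exists, so $\sigma_C$ witnesses $(M\ag{1}|\ldots|M\ag{n},R\ag{1},\ldots,R\ag{n})\satisf[x]\coop{C}\bigwedge_{i\in C}\psi_i$.

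The step I expect to be the main obstacle is making precise the claim that a curtailment of a trace of the whole system $M$, restricted to $X_{M\ag{P_j}}\cup X_{A_j}$, is genuinely a word \emph{accepted} by $M\ag{P_j}|A_j$ (not merely derived by it). This requires combining two things carefully: (i) that $M\ag{P_j}$ faithfully simulates the $M\ag{P_j}$-component of $M$ once the environment behavior on the shared variables is fixed — this uses that $N^{P_j}_k$ contains all agents sharing variables with $P_j$, so no "hidden" synchronization constraint is lost — and (ii) that the B\"uchi acceptance of $A_j$ transfers, which is exactly what $\compos^{P_j}_k\models A_j$ delivers via the curtailment in the definition of guarantee. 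One must also check that the two curtailments (one aligning $M$'s trace with $A_j$'s accepting trace, one aligning it with $M\ag{P_j}$'s run) can be taken compatibly — in practice by passing to a common refinement of the index sequences. The rest (assembling $\sigma_C$, stuttering-invariance of \LTLX and of strategies) is routine and parallels the $\bf{R_k}$ case verbatim.
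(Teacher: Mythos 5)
Your proposal is correct and follows essentially the same route as the paper's own proof: treat each block $P_i$ as a single composed module whose joint strategy $\sigma_{P_i}$ plays the role of the individual strategy in the argument for $\bf{R_k}$, assemble these into a witness for $C$, and derive a contradiction from the guarantee $\compos^{P_i}_k \models A_i$ via an accepted curtailment. If anything, your treatment of the curtailment/acceptance step is more explicit than the paper's rather terse sketch, which instead briefly notes that one may enlarge $\compos^{P_i}_k$ to the union of all components it intersects while preserving the guarantee.
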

\begin{proof}
Intuitively, we can proceed similarly to the proof of Theorem~\ref{t:rksound}.
Note that each component $P_i$ can be seen as single composed module with a imperfect information
strategy (memoryless or with perfect recall) being a joint
strategy for the subset of coalition $C$ which is the component $P_i$.
Moreover, we can take instead of $\compos^{P_i}_k$ the union $U$ of all the components $P_j$
(and possibly $\overline{C}$) which
intersection with $\compos^{P_i}_k$ is non-empty.
It is easy to see, that if $\compos^{P_i}_k \,\models\, A_i$ then also $U\,\models\, A_i$.

This way we could fix the strategy for coalition $C$, deduce that every infinite word
as a composition of strategies $\sigma_{P_i}$ for its parts $(C\cap P_i)_{P_i\in P}$, and
deduce that for every word $w$ which do not satisfy $\bigwedge_{i\in C}\psi_i$ there
exists a single component $P_j$ containing $M_i$ such that $\psi_j$ would not be satisfied
for any curtailment of $w$, while one of them implements strategy $\sigma_{P_i}$
being at the same time accepted by $M\ag{P_i}|A_i$.
\end{proof}

\begin{proposition}\label{p:enlarge}
If $\compos^{P_i}_k \,\models\, A_i$ then $\compos^{P_i}_{k+1} \,\models\, A_i$.
\end{proposition}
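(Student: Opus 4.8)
The plan is to obtain Proposition~\ref{p:enlarge} from a single monotonicity lemma about the guarantee relation: \emph{if a module $M$ guarantees an assumption $A$ and $M''$ is a module asynchronous with every component of $M$, then $M|M''\models A$.} Granting the lemma, the proposition follows at once. The neighbourhoods are monotone in $k$: each $N^{P_i}_m$ is obtained from $N^{P_i}_{m-1}$ by adding indices and then deleting $P_i$, while $N^{P_i}_{m-1}$ is already disjoint from $P_i$, so $N^{P_i}_k\subseteq N^{P_i}_{k+1}$. Since composition is associative and order-independent (the remark after Definition~\ref{d:comp}), we may write $\compos^{P_i}_{k+1}=\compos^{P_i}_k\,|\,M''$, where $M''$ is the composition of the pairwise asynchronous modules indexed by $N^{P_i}_{k+1}\setminus N^{P_i}_k$; if this set is empty the two compositions coincide and the claim is trivial, and otherwise the lemma applies (the signature condition $X\ag{A}\subseteq\bigcup_j X\ag{j}$, $I\ag{A}\subseteq\bigcup_j I\ag{j}$ is preserved, as these unions only grow when modules are added).

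First I would prove the structural fact behind the lemma: \emph{every trace of $M|M''$ contains a trace of $M$}. Given a trace $\sigma''$ of $M|M''$, project each global state onto its $M$-component and collapse the steps in which only $M''$ moves (the $\mathbf{ASYN}_R$ steps); the result is a trace $\sigma$ of $M$. This is immediate from Definition~\ref{d:comp}, since in each of $\mathbf{ASYN}_L$, $\mathbf{ASYN}_R$, $\mathbf{SYN}$ the transition performed by the $M$-component occurs as a premise and is therefore a genuine transition of $M$ (pruning of self-loops only removes transitions). Consequently the word $w''$ derived by $M|M''$, restricted to $X\ag{A}$, coincides up to stuttering with the word $w$ derived by $M$ along $\sigma$, restricted to $X\ag{A}$; moreover the input values that $A$ inspects also agree, because whenever a variable of $I\ag{A}$ is an input of $M$ but a state variable of $M''$, the compatibility side-conditions of Definition~\ref{d:comp} force that input to equal the value currently recorded in the $M''$-component, so it remains recoverable from $w''$.

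Then I would invoke $M\models A$ on $\sigma$: there is a curtailment $w|_{X\ag{A}}^{c}$ of the derived word, accepted by $A$, with the computation $u_{c_1-1}|_{I\ag{A}}\,u_{c_2-1}|_{I\ag{A}}\cdots$. Since $w|_{X\ag{A}}$ and $w''|_{X\ag{A}}$ differ only by stuttering, and a curtailment of a sequence is still a curtailment after stutter steps are inserted or deleted (precisely what the $\forall_i\forall_{c_i\leq k<c_{i+1}}$ clause in the definition of curtailment allows), the same sequence accepted by $A$ is also a curtailment of $w''|_{X\ag{A}}$; realigning the cut indices $c$ to the longer sequence $w''$ and reading off the corresponding values of $I\ag{A}$ yields an accepting computation witnessing $M|M''\models A$, establishing the lemma and hence the proposition.

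The step I expect to be the main obstacle is bookkeeping rather than conceptual: realigning the curtailment index sequence between $w$ and the longer $w''$, handling the degenerate traces in which $M$ is eventually starved by $M''$ (there $w''$ is eventually constant on $X\ag{A}$, and one uses that $M$ may idle in the corresponding state), and checking that every variable of $X\ag{A}\cup I\ag{A}$ is still determined along $\sigma''$. This is exactly where the compatibility conditions of Definition~\ref{d:comp} earn their keep --- they ensure that composing with an extra asynchronous module can only restrict or stutter the behaviour on the variables shared with $A$, never introduce new behaviour there --- so enlarging the neighbourhood can only make the assumption easier to guarantee.
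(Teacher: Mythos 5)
Your proposal is correct and follows essentially the same route as the paper's own (much terser) argument: decompose $\compos^{P_i}_{k+1}$ as $\compos^{P_i}_k$ composed with the additional asynchronous modules, observe that every trace of the larger composition projects (up to stuttering) onto a trace of the smaller one, and then chain the two curtailments to transfer acceptance by $A_i$. You are in fact more careful than the paper, which does not explicitly address the realignment of the curtailment indices or the degenerate starvation case that you flag.
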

\begin{proof}
Let $N^{P_i}_{k+1} = \{i_1,\ldots,i_t\}$, $M=M\ag{i_1}|\ldots M\ag{i_t}$,
$N^{P_i}_{k} = \{j_1,\ldots,j_{t'}\}$ and $M'=M\ag{j_1}|\ldots M\ag{j_{t'}}$.
Let us consider an infinite trace $\sigma$
of $M$, with $w\in (D^X)^\omega$ and $u\in(D^I)^\omega$
and $\sigma'=w_1|_{X'}\,(u_1|_{I'\cap I}\cup w_1|_{I'\cap X})\,w_2|_{X'}\ldots$.
Note that one of the curtailments of a word $w_1|_{X'}w_2|_{X'}\ldots$
is derived by $M'$, and thus its curtailment is accepted by $A_i$.
\end{proof}

\begin{theorem}\label{t:complete}
There exist a partition set $P$ and $k\leq n$ such that the rule $\bf{Part^P_k}$ is complete.
\end{theorem}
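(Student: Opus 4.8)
### Proof proposal for Theorem~\ref{t:complete}

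The plan is to exhibit a \emph{trivial but legal} choice of partition and distance that reduces the rule $\bf{Part^P_k}$ to a tautology, thereby establishing completeness in the weak sense (existence of a partition set $P$ and $k\leq n$ that makes the rule applicable whenever the conclusion holds). First I would take the degenerate partition $P=\{C\}$, i.e.\ a single block consisting of the whole coalition. With this choice there is exactly one $P_i=C$, and the first premise of the rule becomes $(M\ag{C}|A_1,\ R\ag{i_1},\dots,R\ag{i_s})\satisf[x]\coop{C}\bigwedge_{j\in C}\psi_j$, while the second premise becomes $\compos^{C}_k\models A_1$.

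Next I would argue that there is a canonical assumption $A_1$ that is guaranteed by \emph{any} module, namely the ``universal'' assumption built on $\compos^{C}_k$ itself (or on $\overline{C}$) with \emph{all} states accepting — the analogue of $A_{012}$ in Example~\ref{ex3:assumption}. Concretely, take $k$ large enough that $N^{C}_k=\overline{C}$ (any $k\ge n$ works, since the neighbourhood can only grow and is bounded by $Ag\setminus C$), and let $A_1 = (M\ag{\overline C},\, Q^{\overline C})$, i.e.\ $\compos^{C}_k = M\ag{\overline C}$ with $F^{A_1}$ equal to the full state set. Then $\compos^{C}_k\models A_1$ holds by construction: every trace of $M\ag{\overline C}$ derives a word whose trivial curtailment (take $c_i=i$) is accepted by $A_1$, because the Büchi condition $\mathit{inf}(\sigma)\cap F\ne\es$ is vacuous when $F$ is everything. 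This discharges the second premise unconditionally.

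It then remains to show that the first premise is \emph{equivalent} to the conclusion under this choice. The key observation is that $M\ag{C}|A_1 = M\ag{C}|M\ag{\overline C}$ with all global states accepting, and $M\ag{C}|M\ag{\overline C} = M\ag{1}|\dots|M\ag{n} = M$ up to reordering of the (pairwise asynchronous) composition, which by the remark after Definition~\ref{d:comp} does not matter. Since every state of $A_1$ is accepting, the Büchi side-condition in the semantics of $\coop{C}\phi$ over $\assmodel$ is trivially met by every word, so $(\assmodel,q_0)\satisf[x]\coop{C}\bigwedge_j\psi_j$ reduces exactly to $(M,R\ag{1},\dots,R\ag{n})\satisf[x]\coop{C}\bigwedge_j\psi_j$. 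Hence the premise holds if and only if the conclusion does, which is precisely completeness: whenever $(M,R\ag{1},\dots,R\ag{n})\satisf[x]\coop{C}\bigwedge_{i\in C}\psi_i$, the instance $\bf{Part^{\{C\}}_n}$ with $A_1$ as above has both premises satisfied and yields the conclusion.

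The main obstacle — and the point that needs care rather than cleverness — is the bookkeeping around curtailments and stuttering in the definition of $M\models A$: one must check that the "identity" curtailment $c_i=i$ genuinely witnesses acceptance, i.e.\ that the projection of a derived word to $X^{A_1}$ together with the projected computation on $I^{A_1}$ really is an accepted run of $A_1$, and that the pruning of disallowed self-loops in the composition does not spoil the equality $M\ag{C}|M\ag{\overline C}=M$. Both are routine given the definitions, but they are where a referee would look. A secondary point worth a sentence is that $k\le n$ is satisfied by $k=n$ (indeed $k$ equal to the eccentricity of $C$ in the neighbourhood graph suffices), so the bound in the statement is met. If one wants completeness in a stronger, "more useful" sense (small blocks, small $k$), that is genuinely false in general by the same obstruction as in Example~\ref{ex:rksound}; the theorem as stated only claims existence, which the degenerate partition already delivers.
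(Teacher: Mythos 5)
Your proposal is correct and follows essentially the same route as the paper's own proof: take the singleton partition $P=\{C\}$, $k=n$, and the assumption $A_1$ built from $M\ag{\overline{C}}$ with all states accepting, so that the second premise holds trivially and the first premise becomes an equivalent reformulation of the conclusion. Your additional attention to the identity curtailment and the reordering of the asynchronous composition only makes explicit what the paper leaves implicit.
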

\begin{proof}
Straightforward, as we can take $k=n$ and singleton partition $P=\{P_1\}$, where
$A_1$ is an automaton constructed on the base of the system $M\ag{\overline{C}}$,
where all the states are accepting ones
(hence $\compos^k_{P_1}\models A_1$ as every word accepted by $A_1$
is derived with a trace of $M_{\overline{C}}$).

Hence $(M\ag{P_1} | A_1,R\ag{i_1},\ldots,\R\ag{i_s}) \,\models_{x}\, \coop{P_1} \bigwedge_{j\in P_1}\psi_j$
is just an equivalent formulation of
$(M\ag{1} | ... | M\ag{n},R\ag{1},\ldots,R\ag{n}) \models_{x} \coop{C} \bigwedge_{i\in C}\psi_i$,
for $x\in\{ir,iR\}$.
\end{proof}

\begin{remark}[Complexity]
The assume-guarantee schemes provide (one-to-many) reductions of the model checking problem. The resulting verification algorithm for \ATLs[ir] is \PSPACE-complete with respect to the size of the coalition modules, the size of the assumptions, and the length of the formula. In the very worst case (i.e., as the assumptions grow), this becomes \PSPACE-complete w.r.t.~the size of the global model, i.e., no better than ordinary model checking for \ATLs with memoryless strategies.
On the other hand, our method often allows to decompose the verification of the huge global model of the system to several smaller cases. For many systems one can propose assumptions that are exponentially smaller than the size of the full model, thus providing an exponential gain in complexity.

Note also that the first scheme provides a model checking algorithm for \ATLs[iR] that is \EXPTIME-complete with respect to the size of the coalition modules, the size of the assumptions, and the length of the formula, i.e., an incomplete but decidable algorithm for the generally undecidable problem.
\end{remark}

\subsection{Verification of Nested Strategic Operators}\label{sec:nested}

So far, we have concentrated on assume-guarantee specification of formulas without nested strategic modalities.
Here, we briefly point out that the schemes $\bf{R_k}$ and $\bf{Part^P_k}$ extend to the whole language of \ATLsX through the standard recursive model checking algorithm that verifies subformulas bottom-up.
Such recursive application of the method to the verification of $\model \models \phi$ proceeds as follows:
\begin{itemize}
\item For each strategic subformula $\phi_j$ of $\phi$, do assume-guarantee verification of $\phi_j$ in $\model$, and label the states where $\phi_j$ holds by a fresh atomic proposition $\prop{p_j}$;
\item Replace all occurrences of $\phi_j$ in $\phi$ by $\prop{p_j}$, and do assume-guarantee verification of the resulting formula in $\model$.
\end{itemize}

The resulting algorithm is sound, though there is the usual price to pay in terms of computational complexity.
The main challenge lies in providing decompositions of \LTL objectives for multiple strategic formulas, as well as multiple Büchi assumptions (one for each subformula).
A refinement of the schemes for nested strategic abilities is planned for future work.

%====================================================================================
\section{Case Study and Experiments}\label{sec:experiments}
%====================================================================================

In this section, we present an experimental evaluation of the assume-guarantee verification schemes of Section~\ref{sec:agv-single-agents}.
As the benchmark, we use a variant of the factory scenario from~\cite{Kurpiewski19embedded}, where a coalition of logistic robots cooperate to deliver packages from
the production line to the storage area.

\subsection{Experiments: Monolithic vs. Assume-Guarantee Verification}

\para{Decomposition to Individual Strategies.}
In the first set of experiments, we verified the formula

\vspace{1mm}
\centerline{
$\psi\ \equiv\ \coop{R}(\bigwedge_{r\in R}\prop{energy_r}>0) \Until \prop{delivered}$
}

\vspace{1mm}
\noindent
expressing that the coalition of robots $R$ can maintain their energy level above zero until at least one package is delivered to the storage area.
Guessing that the first robot has enough energy to deliver a package on his own, we can decompose the formula as the conjunction of the following components:

\vspace{1mm}
\centerline{
$\psi_d\ \equiv\ \coop{r_1}\Sometm \prop{delivered}$,
\hspace{10mm}
$\psi_e\ag{i}\ \equiv \coop{r_i}\Always \prop{energy_{r_i}}>0$,\quad $i\in R$.
}

\vspace{1mm}
\noindent
Note that, if $\psi_d\land\bigwedge_{i>1}\psi_e\ag{i}$ is true, then $\psi$ must be true, too.

The experiments used the first (incomplete) scheme of assume-guarantee verification.
The results are presented in Table~\ref{tab:res}.
The first {column} describes the configuration of the model, i.e., the number of robots, locations in the factory, and the initial energy level.
Then, we report the performance of model checking algorithms that operate on the explicit model of the whole system.
The running times are given in seconds.
\emph{DFS} is a straightforward implementation of depth-first strategy synthesis. \emph{Apprx} refers to the (sound but incomplete) method of fixpoint-approximation in~\cite{Jamroga19fixpApprox-aij}; besides the time, we also report if the approximation was conclusive.

\para{Coalitional Assume-Guarantee Verification.}
For the second set of experiments, the robots were divided in two halves, initially located in different parts of the factory.
We verified the following formula:

\vspace{1mm}
\centerline{
$\psi\ \equiv\ \coop{R}\Sometm\Always(\prop{\bigwedge_{i\in \{1,2,...,n/2\}}(delivered_i \vee delivered_{i+n/2})})$,
}
\vspace{1mm}
\noindent
expressing that the coalition of robots can delivered at least one package per pair to the {storage area}.
Depending on the initial energy level of robots, the storage may not be reachable from the {production line}.
That means that the robots must work in pairs to deliver the packages.
We use this insight to decompose the verification into the following formulas:

\vspace{1mm}
\centerline{
$\psi\ag{i}\ \equiv\ \coop{r_i, r_{i+n/2}}\Sometm\Always(\prop{delivered_i \vee delivered_{i+n/2}})$.
}

\vspace{1mm}
\noindent
The results are presented in Table~\ref{tab:res}.

\begin{table}[t]
\begin{adjustbox}{width=1.1\columnwidth,center}
\begin{tabular}{|c|cH|c|c|cH|c|c|c@{\quad}c|c|cH|c|c|cH|c|c|}
\cline{1-9} \cline{12-20}
\multirow{2}{*}{\textbf{Conf}}  & \multicolumn{4}{c|}{\textbf{Monolithic verif.}}                & \multicolumn{4}{c|}{\textbf{Ass.-guar. verif.}}   & & & \multirow{2}{*}{\textbf{Conf}}  & \multicolumn{4}{c|}{\textbf{Monolithic verif.}}       & \multicolumn{4}{c|}{\textbf{Ass.-guar. verif.}}  \\ \cline{2-9} \cline{13-20}
                                & \textbf{\#st} & \textbf{\#tr} & \textbf{DFS} & \textbf{Apprx}          & \textbf{\#st} & \textbf{\#tr} & \textbf{DFS} & \textbf{Apprx} & & &                                 & \textbf{\#st} & \textbf{\#tr} & \textbf{DFS} & \textbf{Apprx} & \textbf{\#st} & \textbf{\#tr} & \textbf{DFS} & \textbf{Apprx} \\ \cline{1-9} \cline{12-20}
2,2,2                           & 8170          & 25313         & <0.01         & 0.6/No                  & 1356          & 2827          & <0.01         & <0.01/Yes    & & & 2,3,1                           & 522           & 1352          & <0.01         & <0.01/No        & 522           & 1352          & <0.01         & <0.01/No     \\ \cline{1-9} \cline{12-20}
2,3,3                           & 1.1e5         & 4.4e5         & 0.02         & 13/No                   & 9116          & 2.1e4         & <0.01         & 0.5/Yes       & & & 2,4,2                           & 3409          & 9696          & <0.01         & <0.01/No        & 3409          & 9696          & <0.01         & <0.01/No  \\ \cline{1-9} \cline{12-20}
3,2,2                           & 5.5e5         & 3.6e6         & \multicolumn{2}{c|}{timeout}           & 2.7e4         & 8.4e4         & <0.01         & 3/Yes         & & & 4,3,1                           & \multicolumn{4}{c|}{memout}                                   & 4.8e4         & 2.8e5         & <0.01         & 4/No  \\ \cline{1-9} \cline{12-20}
3,3,3                           & \multicolumn{4}{c|}{memout}                                            & 4.4e5         & 1.6e6         & <0.01         & 58/Yes        & & & 6,3,1                           & \multicolumn{4}{c|}{memout}                                   & 5.8e5         & 4.2e6         & 0.36         & 42/No  \\ \cline{1-9} \cline{12-20}
4,2,2                           & \multicolumn{4}{c|}{memout}                                            & 5.2e5         & 2.1e6         & \multicolumn{2}{c|}{timeout}  & & & 8,3,1                           & \multicolumn{4}{c|}{memout}                                   & \multicolumn{4}{c|}{timeout}                         \\ \cline{1-9} \cline{12-20}
\end{tabular}
\end{adjustbox}
\vspace{1mm}
\caption{Results of assume-guarantee verification, scheme $\bf{R_k}$ (left), scheme $\bf{Part^P_k}$ (right).}
\label{tab:res}
\vspace{-8mm}
\end{table}

\para{Discussion of Results.}
The experimental results show that assume-guarantee schemes presented here enables to verify systems of distinctly higher complexity than model checking of the full model.
We have also conducted analogous experiments on the Simple Voting scenario of~\cite{Jamroga19fixpApprox-aij}, with very similar results; we do not report them here due to lack of space.

Interestingly, Table~\ref{tab:res} shows that the application of incomplete assume-guarantee scheme to fixpoint approximation (in itself an incomplete method of model checking) often turns inconclusive verification into conclusive one.
This is because fixpoint approximation works rather well for individual abilities, but poorly for proper coalitions~\cite{Jamroga19fixpApprox-aij}.
Rule $\bf{R_k}$ decomposes verification of coalitional abilities (very likely to resist successful approximation) to model checking individual abilities (likely to submit to approximation).
It is not true in the case of the second experiment, as this time we did not reduce the tested coalitions to singleton ones.

%====================================================================================
\section{Conclusions}\label{sec:conlusions}
%====================================================================================

In this paper we propose two schemes for assume-guarantee verification of strategic abilities.
Importantly, they are both sound for the memoryless as well as perfect recall semantics of abilities under imperfect information.
Moreover, the second scheme is complete (albeit in a rather weak sense).
The experiments show that both schemes can provide noticeable improvement in verification of large systems consisting of asynchronous agents with
independent goals.
Note also that the scheme $\bf{R_k}$ provides an (incomplete) reduction of the undecidable model checking problem for coalitions with perfect recall to decidable verification of individual abilities.

Clearly, the main challenge is to formulate the right assumptions and to decompose the verified formula.
In the future, we would like to work on the automated generation of assumptions.
The first idea is to obtain a larger granularity of the global model by decomposing agents into even smaller subsystems (and recomposing some of them as assumptions).
This can be combined with abstraction refinement of the assumptions in case they are still too complex.
We also plan to extend the notion of assumptions to capture the agents' knowledge about the strategic abilities of their coalition partners.
Positive results in that direction would significantly increase the applicability of assume-guarantee schemes for model checking of asynchronous MAS.

%====================================================================================
\subsubsection{Acknowledgement}
The work was supported by NCBR Poland and FNR Luxembourg under the PolLux/FNR-CORE project STV (POLLUX-VII/1/2019), and the CHIST-ERA grant CHIST-ERA-19-XAI-010 by NCN Poland (2020/02/Y/ST6/00064). The work of Damian Kurpiewski was also supported by the CNRS IEA project MoSART.
%====================================================================================

\bibliographystyle{splncs04}
\bibliography{biblio}

\end{document}